\pgfplotsset{compat=newest}
\newtheorem{note}{Note}
\newtheorem{problem}{Problem}
\newtheorem{lemma}{Lemma}
\newcommand{\m}[1]{\boldsymbol{#1}} 
\DeclareMathOperator*{\argmin}{arg\,min} 
\DeclareMathOperator*{\diag}{diag}
\newcommand{\mat}[1]{\begin{bmatrix*}[r]#1 
\end{bmatrix*}}
\newcommand{\matc}[1]{\begin{bmatrix*}[c]#1 
\end{bmatrix*}}
\newcommand{\transpose}{^{\intercal}} 
\newcommand{\polorder}{n_{\text{p}}}
\newcommand{\numberweights}{n_{\text{w}}}
\newcommand{\sysorder}{n}
\newcommand{\actionorder}{m}
\newcommand{\normfactor}{V_\text{N}}
\newcommand{\parameter}{p} 
\newcommand{\ballposition}{s}
\newcommand{\normstatevec}{\bar{\m{x}}}
\newcommand{\normparametervec}{\bar{\m{\parameter}}}
\newcommand{\horizon}{h_{\text{r}}}
\newcommand{\trajref}{\bar{r}}
\newcommand{\stepplotheight}{0.95in}
\newcommand{\stepplotwidth}{0.265\textwidth}
\newcommand{\stepplotcostheight}{0.7in}
\newcommand{\stepplotgap}{1in}
\newcommand{\stepplotsteps}{450}
    \newcommand\todo[1]{\textcolor{red}{\textbf{ TODO:} #1 }} 
    \newcommand\todod[1]{} 
    \newcommand\comment[1]{\textcolor{blue}{\textbf{ Comment:} #1 }} 
    \newcommand\commentd[1]{} 
    \newcommand\comments[1]{\textcolor{blue}{\textbf{ Comment Sean:} #1 }} 
    \newcommand\commentsd[1]{} 
    \newcommand\commentf[1]{\textcolor{cyan}{\textbf{ Comment Florian:} #1 }} 
    \newcommand\commentfd[1]{} 
     \newcommand\commentj[1]{\textcolor{blue}{\textbf{ Comment Jairo:} #1 }} 
    \newcommand\commentjd[1]{} 
    \newcommand\frages[1]{\textcolor{orange}{\textbf{ Rückfrage @Sean:} #1 }} 
    \newcommand\fragesd[1]{} 
    \newcommand\fragej[1]{\textcolor{magenta}{\textbf{ Rückfrage @Jairo:} #1 }} 
    \newcommand\fragejd[1]{}     
    \newcommand\fragef[1]{\textcolor{cyan}{\textbf{ Rückfrage @Florian:} #1 }} 
    \newcommand\fragefd[1]{}
    \newcommand\gliederung[1]{#1 }
    \newcommand\gliederungd[1]{}
    \renewcommand\check{\textcolor{green}{\checkmark}}
    \newcommand\todo[1]{} 
    \newcommand\todod[1]{} 
    \newcommand\comment[1]{} 
    \newcommand\commentd[1]{} 
    \newcommand\comments[1]{} 
    \newcommand\commentsd[1]{} 
    \newcommand\commentf[1]{} 
    \newcommand\commentfd[1]{} 
     \newcommand\commentj[1]{} 
    \newcommand\commentjd[1]{} 
    \newcommand\frages[1]{} 
    \newcommand\fragesd[1]{}
    \newcommand\fragej[1]{} 
    \newcommand\fragejd[1]{}     
    \newcommand\fragef[1]{} 
    \newcommand\fragefd[1]{}
    \newcommand\gliederung[1]{}
    \newcommand\gliederungd[1]{}
    \renewcommand\check{}
\title{\LARGE \bf
	Adaptive Optimal Trajectory Tracking Control \\Applied to a Large-Scale Ball-on-Plate System
}
\author{Florian~K\"{o}pf*, Sean~Kille*, Jairo~Inga, S\"{o}ren~Hohmann%
	\thanks{F.~K\"{o}pf, S.~Kille, J.~Inga, and S.~Hohmann are with the Institute of Control Systems, Karlsruhe Institute of Technology (KIT), Karlsruhe, Germany\newline(e-mail: {\{florian.koepf, jairo.inga, soeren.hohmann\}@kit.edu})}
	\thanks{*These authors contributed equally to this work.}%
}
\begin{document}

\maketitle
\thispagestyle{empty}
\pagestyle{empty}

\begin{abstract}
While many theoretical works concerning Adaptive Dynamic Programming (ADP) have been proposed, application results are scarce. 
Therefore, we design an ADP-based optimal trajectory tracking controller and apply it 
to a large-scale ball-on-plate system.
Our proposed method incorporates an approximated reference trajectory instead of using setpoint tracking and allows to automatically compensate for constant offset terms. Due to the off-policy characteristics of the algorithm, the method requires only a small amount of measured data to train the controller.
Our experimental results show that this tracking mechanism significantly reduces the control cost compared to setpoint controllers. Furthermore, a comparison with a model-based optimal controller highlights the benefits of our model-free data-based ADP tracking controller, where no system model and manual tuning are required but the controller is tuned automatically using measured data.

\gliederungd{While many theoretical works concerning Adaptive Dynamic Programming (ADP) have been proposed, application results are scarce. Therefore, in this work an ADP-based optimal trajectory tracking controller that does not require a system model but is trained directly from a small amount of measurement data\commentj{ist das ein Merkmal der Methode des Papers, oder sind alle ADP Algorithmen so?}\commentf{das ist zumindest im Gegensatz zu vielen RL-Methoden sehr hervorzuheben! Da wir off-policy lernen, können wir die Daten wiederverwenden} is applied to a large-scale ball-on-plate system. Our proposed method not only allows to automatically compensate for constant offset terms but also incorporates an approximated reference trajectory instead of setpoint tracking. Our experimental results show that this tracking mechanism significantly reduces the control cost compared to setpoint controllers. Furthermore, a comparison with a model-based optimal controller highlights the benefits\commentj{wäre cool kurz zu nennen, welche das sind.}\commentf{1. kein Systemmodell benötigt
2. gleicht Modellungenauigkeiten aus, beispielsweise die automatische Offsetkorrektur} of our model-free data-based ADP tracking controller.}
\end{abstract}

\begin{textblock*}{\textwidth}(1.9cm,26.3cm){\footnotesize © 2021 IEEE.  Personal use of this material is permitted.  Permission from IEEE must be obtained for all other uses, in any current or future media,\vspace{-0.1cm}\linebreak\vspace{-0.1cm} including reprinting/republishing this material for advertising or promotional purposes, creating new collective works, for resale or redistribution to servers\linebreak\vspace{-0.1cm} or lists, or reuse of any copyrighted component of this work in other works.}\end{textblock*}
\section{Introduction}\label{sec:introduction}
Model-free Adaptive Dynamic Programming (ADP) is a promising approach to control dynamical systems whenever a system model is unavailable, inaccurate or difficult to achieve \cite{Lewis.2009, Wang.2018, Jiang.2014c, Bhasin.2013}.  
While many control applications require to track desired reference trajectories, this is non-trivial to incorporate into the ADP formalism adequately \cite{KWF20, Koe+20}.

Assuming that the reference trajectory is generated directly by an unknown command system (cf. \cite{Mod14, Luo+16, Kiu14}) limits the flexibility of the reference trajectory that can be commanded\footnote{If the reference trajectory does not result from this unknown command system during training, these methods fail.}.
Alternative approaches extend the system state by the desired state \cite{Ng04, Hwa17, Puc20} or the current and next desired state \cite{Shi18}. 
Shi et al. \cite{Shi18} take into account the desired position of an underwater vehicle model at the current and next time step and train their controller using pseudo-averaged Q-learning in simulation.
Although the learned (projected) setpoint controller for an autonomous helicopter \cite{Ng04} and the setpoint controller for a quadrotor \cite{Hwa17} have been applied to real systems, in \cite{Ng04} and \cite{Hwa17} the training procedure is based on simulations, thus requiring a model of the system to be controlled.
Puccetti et al. \cite{Puc20} use model-free ADP for the speed tracking control of a real car, where a velocity setpoint is incorporated into the state-action value function.
Nevertheless, these representations of the reference trajectory have limited \cite{Shi18, Ng04} or no preview capabilities \cite{Hwa17, Puc20}, which results in a controller that tends to lag behind.


\commentjd{Muss hier nicht ein Therefore oder so? Sonst versteht man nicht den Zusammenhang von Shi zu diesem Satz.}\commentfd{Hm, das therefore würde stark suggerieren, dass Shi das auch kritisiert und deshalb eine Alternative vorschlägt, die das Problem nicht hat. Shi geht aber so garnicht auf die andere Methode ein. Eigentlich soll hier eben nur ein anderer Ansatz aufgeführt werden, ohne zwingend einen Zusammenhang zur vorher genannten Klasse an Methoden zu nennen. ABER: vermutlich ist es die einfachste Lösung, das von dir vorgeschlagene ``therefore" einzufügen, auch, wenn ich Shi hier nicht so eine saubere Einordnung zugestehen würde, wie wir dann suggerieren.}


Therefore, in our previous works, we have incorporated the reference trajectory over a finite horizon into the Q-function \cite{KWF20} or used an approximated reference trajectory \cite{Koe+20}. Instead of assuming an \textit{unknown} underlying command system, the controller approximates an arbitrary reference trajectory in a way that is compatible with ADP allowing flexible reference trajectories.
However, \cite{Lewis.2009, Wang.2018, Jiang.2014c, Bhasin.2013, Mod14, Luo+16, Kiu14, Shi18, Koe+20, KWF20} only provide simulation results and no application to a real system---an essential step that is missing in order to validate ADP methods.

In this paper, we propose an ADP tracking controller which incorporates an approximated reference trajectory and apply it to a real large-scale ball-on-plate system (depicted in Fig.~\ref{fig:BaPFoto}).\fragejd{das Bild kommt nun leider schon vor der Nennung durch die Umformulierung, ich finde es aber rechts oben auf der ersten Seite so gut, dass ich das dennoch so lassen würde, was meinst du?} \commentjd{Ich glaube ich würde das so lassen.}
\begin{figure}[t!] 
    \centering
    \includegraphics[width=\columnwidth, trim=0 35 0 25, clip]{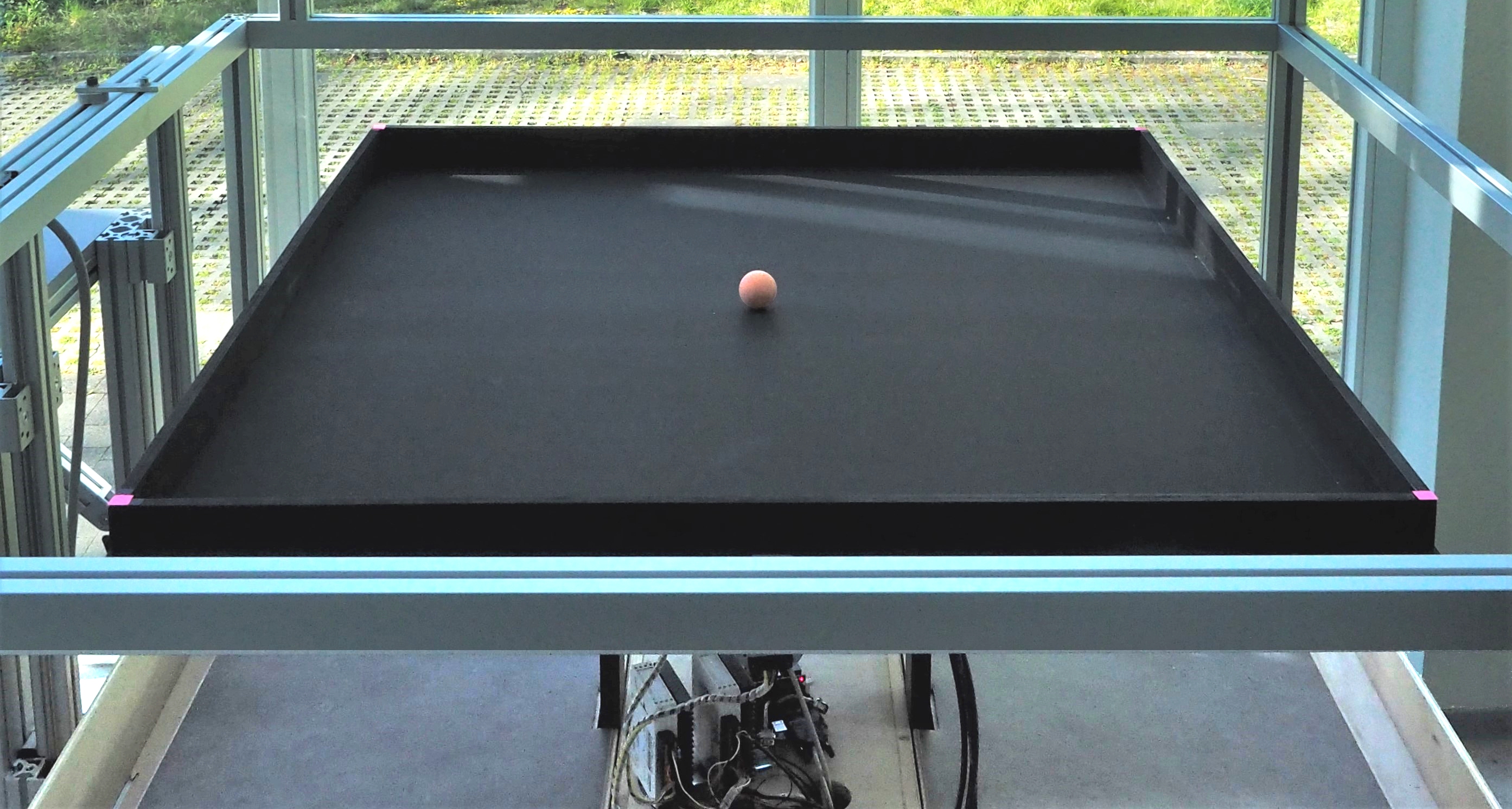}
    \caption{Large-scale ball-on-plate system for ADP-based trajectory tracking control.}
    \label{fig:BaPFoto}
\end{figure}
The  ball-on-plate  system  is  a  widely  used example  for  benchmarking  controllers.  Existing  controllers are either fully model-based \cite{Awt02, Flo12, Dus17, Knu03, Kas19} or model-based with additional fuzzy supervision \cite{Mia08}. Thus, our work is the first application of a model-free ADP-based controller to a ball-on-plate system. Furthermore, instead of incorporating the reference trajectory, existing controllers either perform no tracking of the ball position at all \cite{Awt02, Flo12, Kas19} or simply consider the current deviation from a setpoint causing a trajectory that lags behind \cite{Dus17, Knu03, Mia08}. 

In contrast to existing controllers, our method does not require a model of the ball-on-plate system as we train our optimal tracking controller directly through a policy iteration (PI) mechanism \cite{Lag+03} using measured data from a real system. This avoids tedious model design followed by manual tuning.
By using an off-policy algorithm, the measured data can be re-used, reducing the effort to record training data\footnote{In contrast, \textit{on-policy} learning would require new data to be collected after each policy improvement step and the estimates would be biased when (indispensable) exploration noise is used \cite{Li.2019}.}.
Furthermore, instead of the widely-used setpoint tracking, our ADP controller incorporates information on the course of the reference trajectory which allows predictive rather than reactive behavior and avoids lagging behind. Our automatically tuned controller is also able to learn static offsets to compensate for asymmetries. 
In summary, our main contributions include an ADP tracking controller which is
\begin{itemize}
	\item data-efficient as it works off-policy and uses a flexible and compact local approximation of arbitrary reference trajectories that is compatible with ADP
	\item trained on a real system using measured data, requiring neither system parameters nor manual tuning
	\item compared to a model-based and a setpoint controller.
\end{itemize} 

\fragejd{Habe ich dich in deinen Kommentaren (siehe unten auskommentiert) wirklich richtig verstanden, dass ich von der Story her erst nennen soll, was wir in dem Paper machen (``In this paper,...") und DANN den BaP-Stand-der-Technik? Oder soll der nachfolgende Abschnitt früher kommen?} \commentjd{Ja du hast mich richtig verstanden, aber jetzt dass ich das sehe, denke ich, dass es besser wäre, das vorher zu bringen, oder? Also nach (depicted in Fig. 1) würde ich schreiben: "The ball-on-plate system is a widely used example for benchmarking controllers. Existing controllers are either ..... (Frage: Wurde ADP ohne Tracking schon mal an einem BaPS gestestet? Wenn nicht, ggf. erwähnen.)}\commentfd{meines Wissens gab es das an einem BaP noch nicht}

The remainder of this paper is structured as follows: In Section~\ref{sec:problem}, the system and problem description are given. The theoretical background to our ADP tracking formalism is given in Section~\ref{sec:theory}. In Section~\ref{sec:learning}, we present our method. Results are given in Section~\ref{sec:results}, before we conclude the work.  \todod{Kurzbeschreibung mit Verweisen, sobald Inhalte wirklich klar sind.}

\commentjd{Ich finde, in der Einleitung sind viele Sachen vermischt.
Zum Einen wird das "Trajectory Tracking" diskutiert,
zum Anderen dass ADP bisher fast nur mit Simulationen getestet wurde,
und dann noch ein bisschen Ball-auf-Platte-Literatur
und noch nebenbei das mit dem Simulationsmodell.
Man könnte das besser strukturieren.
Ich lese allerdings noch weiter, bevor ich einen Vorschlag mache :)}

\commentjd{Ok die Story müsste meiner Meinung nach folgende sein:
1. ADP toll weil xyz, in letzter Zeit zunehmend präsent
2. Referenz-Tracking nicht trivial... es gab ein paar Paper die das betrachtet haben
3. Aber diese Paper zeigen nur Simulationen
4. Wir haben ein ADP Verfahren, welches folgende Vorteile bzgl. Tracking hat aber auch diese anderen Vorteile... 
5. Es ist nun wichtig, die Methoden auch mit realen Systemen zu evaluieren weil xyz...
6. Ein tolles Benchmark-System ist das BaPS und ist ein gutes Beispiel weil xyz.... Modell nicht exakt, etc.
7. Auch beim BaPS ist das Refersupportenz-Tracking, unabhängig von ADP, nicht wirklich 100 Prozent gut gelöst
$\rightarrow$ Potential 
Daher präsentieren wir ... (hier dann aktueller letzter Abschnitt von der Einleitung) und zeigen die Praxistauglichkeit unseres ADP-Verfahrens.
}

\gliederungd{
Inhaltsübersicht dieses Abschnitts:
\begin{itemize}
    
    \item Adaptive Optimalregelung im Vergleich zu modellbasierter Optimalregelung in Anwendung
    \item Bisher viele theoretische Arbeiten zu adaptiver Optimalregelung, aber wenig Anwendung. Insbesondere für Anwendung adaptiver Optimalregler mit beliebigen Referenztrajektorien Lücke; BaP-Alleinstellungsmerkmal: die einzigen die komplett lernen und die einzigen, die ein tracking, das über eine sollposition hinausgeht, verwenden
    \item Literatur, die genannt werden kann:
    \begin{enumerate}
        \item Andere BaP-Quellen (und Methoden, wie die geregelt werden; \todo{Fokus auf Reglerentwurf (meist modellbasiert) und Tracking (meist nur zu 0 bzw. setpoint)})
            \begin{enumerate}
                \item Awtar \cite{Awt02}: model-based cascade control using linearized equations of motion; no tracking
                \item Braescu \cite{Flo12}: model-based PD control and linear-quadratic optimal control; no tracking
                \item Moarref \cite{Mia08}: Supervisory fuzzy cascade control using linearized equations of motion (model and experimental fuzzy rules) and sliding mode control using nonlinear equations of motion; tracking only in the sense of a commanded setpoint
                \item Dušek \cite{Dus17}: model-based linear-quadratic optimal control (linearized model); tracking only in the sense of a commanded setpoint, controller tested ONLY in simulation
                \item Knuplez \cite{Knu03}: model-based control; tracking only in the sense of a commanded setpoint
                \item Kastner \cite{Kas19}: model-based cascade control, linear-quadratic optimal control and PO optimal state feedback control using linearized equations of motion, no tracking
            \end{enumerate}
        \item Übergang zu ADP, 2--3 wichtige ADP-Quellen ohne Tracking \cite{Lewis.2009}, \cite{Liu.2017}
        
        \item Beliebiges Tracking aber ohne (gescheite) Vorausschau, nur setpoint und/oder ohne Anwendung
            \begin{enumerate}
                \item In \cite{Puc20}, adaptive optimal control is used for speed tracking control in a real car. Therefore a desired setpoint of the velocity is incorporated into the state-action value function.
                \item Beide Helikopterpaper \cite{Ng04} (auch in Simulation gelernt und dann am realen System getestet (benötigt Modell), arbeitet auch ohne Verlauf der Referenz (projected position control)) und \cite{Hwa17} (position control of a quadrotor, rein in der Simuation trainiert und dann am realen System getestet! wir: Training REIN basierend auf Messdaten)
                \item Shi et al. \cite{Shi18} extend the system state of an underwater vehicle model by the desired reference positions of the current and next time step and perform pseudo-averaged Q-learning. However, their method has limited preview capabilities and they only provide simulation results and no application to a real system.
            \end{enumerate}
        \item Tracking mit ``dauerhaft Referenzsystem folgen" (machen wir ähnlich, nur dass wir immer wieder neu fitten) (alle Paper nur theoretisch/Simulation?) Assuming that the reference trajectory is generated by an unknown command system (cf. \cite{Mod14, Luo+16, Kiu14}) strongly limits the flexibility of the reference trajectory that can be commanded. Instead of assuming an unknown command system, our controller approximates an arbitrary reference trajectory in a way that is compatible with Adaptive Dynamic Programming allowing flexible reference trajectories.
        \item In our previous works, we have incorporated the exact reference trajectory over a finite horizon into the Q-function \cite{KWF20} or used an approximated reference trajectory \cite{Koe+20}. In this paper, we will modify our previous method \cite{Koe+20} and apply it to a real system.
        
        \vspace{1cm}
        
        \item Zu Vergleichszwecken: Ansatz mit modellbasierter Methode vergleichen. A model-based position controller for our ball-on-plate system has been proposed \cite{Kas19} which is able to achieve desired setpoints but does not incorporate the desired trajectory and requires a system model.
        \item In \cite[Theorem~2]{Koe+20}, an optimal tracking controller is given for the linear-quadratic case with known system dynamics and a parametrized reference approximation. This controller will also be implemented on the ball-on-plate system in order to compare model-based optimal control with our model-free implementation.
        \comment{Fokus auf modellfreie Methoden legen, modellbasiert nicht ausbreiten.}
    \end{enumerate}
\end{itemize}}

\section{System and Problem Description}\label{sec:problem}
\gliederungd{
Inhaltsübersicht dieses Abschnitts:
\begin{itemize}
    \item BaP system description \comment{1D Betrachtung hier anführen? Dann kann man direkt die Problemformulierung auch im 1D erklären (analog zur MA). }\comment{Ja, finde ich glaube ich sinnvoll.}
    \item problem description
\end{itemize}}
\commentd{habe System- und Problembeschreibung von der Reihenfolge umgedreht, da ich das Problem konkret für das BaP-System erklären würde, oder was denkst du? Insbesondere, da \eqref{eq:guetefkt} ja auch nur einen Sollwert für den ersten Zustand vorgibt und die vier Zustände in der Systembeschreibung erklärt werden. \textbf{Sean:} Ja, finde ich auch schlüssiger! Auch die 1D Vereinfachung kann in der Reihenfolge schlüssig erklärt werden.}

In the following, the ball-on-plate system that is used as an application example for our ADP tracking method and the problem formulation are given.

\subsection{Ball-on-Plate System}\label{sec:BaP}
The system used in this work is a custom-built large-scale ball-on-plate system (see Fig.~\ref{fig:BaPFoto}). Its centerpiece is a \SI{1}{m^2} square plate with a mass of \SI{16.3}{kg}. The plate can be tilted in two dimensions (denoted by $X$ and $Y$) that are orthogonal to each other. Each dimension is actuated by its own designated motor. The plate angles $(\alpha^{[X]}, \alpha^{[Y]})$ and angular velocities $(\omega^{[X]}, \omega^{[Y]})$ are measured every $\SI{10}{ms}$. A ball with a mass of \SI{0.042}{kg} and a radius of \SI{0.02}{m} is located on the plate. Its position in plate-fixed coordinates is tracked via a camera, providing an updated ball position $(\ballposition^{[X]}, \ballposition^{[Y]})$ and ball velocity $(v^{[X]}, v^{[Y]})$ every $\Delta t=\SI{40}{ms}$\commentd{\SI{20}{ms}--\SI{40}{ms}}. For a detailed description of the system architecture and the hardware, see~\cite{Kas19}\footnote{Note that we use a heavier plate and a different ball in the present work.}.
Thus, the resulting system states
\begin{equation}\label{eq:statevector}
    \m{x}_k^{[d]} = \mat{\ballposition_k^{[d]} \ v_k^{[d]} \ \alpha_k^{[d]} \ \omega_k^{[d]} }\transpose
\end{equation} 
are defined for both dimensions $d \in \mathcal{D}=\{X,Y\}$.
The system input 
$u_k^{[d]} = I_k^{[d]}$    
is the current for the motor driver controller. 

\commentfd{Ich habe das mal umformuliert, da das sonst verwirrend wirken kann, hier zu direkt von system model/state space model zu sprechen. Klar geben wir dem ADP-Regler die Zustände mit, aber das konkrete Zustandsraummodell verwenden wir ja eben nicht im ADP-Ansatz (sondern nur im Vergleichsalgorithmus). Außerdem habe ich versucht, das nach weniger ``Modellvorwissen" aussehen zu lassen, indem ich die Zustände direkt bei den Sensormessungen definiert habe, in $\m{x}_d$ werden die dann nur noch zusammengefasst.}
As the two dimensions $X$ and $Y$ only slightly depend on each other, they are usually controlled separately (see \cite{Awt02, Flo12, Knu03, Kas19}).\commentd{umformuliert, denn erst zu sagen, sie hängen voneinander ab, aber wir ignorieren das, triggert mehr Kritik und Rückfragen, glaube ich}
Since the controllers for the two dimensions are trained in the same way, the index~$d$ is omitted in the following for the sake of readability.
 

\subsection{Problem Formulation}\commentd{hier sollte vorher noch erwähnt werden, dass wir die Plattendimensionen getrennt betrachten}
Consider the discrete-time controllable system dynamics 
\begin{equation}
    \m{x}_{k+1} = \m{f} \left( \m{x}_k , \m{u}_k \right) 
\end{equation}
where $k \in \mathbb N_0$ describes the discrete time step, $\m{x}_k \in \mathcal{X} \subseteq \mathbb R^{\sysorder}$
the system state \eqref{eq:statevector}, $\m{u}_k \in \mathcal{U} \subseteq \mathbb R^{\actionorder}$ the control input $I_k^{[d]}$ and $\m{f}$ is \textit{unknown}. From Section~\ref{sec:BaP}, the system order $\sysorder=4$ and number of control inputs $\actionorder=1$ follows for each dimension in $\mathcal{D}$.
At each time step $k$, an approximation of the desired ball position trajectory is denoted by
\begin{equation}\label{eq:rApprox}
    r(\m{\parameter}_k, i) = \m{\parameter}_k\transpose \m{\rho}(i),
\end{equation}
$i\in \mathbb{N}_0$, where $r(\m{\parameter}_k, i)$ is the desired ball position at time $k+i$ (i.e. $i$ denotes the time step on the reference from the local perspective at time $k$), $\m{\parameter}_k \in \Theta \subseteq \mathbb R^{\polorder}$ a parameter vector and $\m{\rho}(i)$ a basis function vector (cf.~\cite{Koe+20}).
The following problem formalizes that the ball position should follow a desired reference trajectory while keeping other system states and the control effort small.

\begin{problem}\label{prob:tracking}
    Assume given basis functions $\m{\rho}(i)$ for reference trajectory approximation and measurement tuples $\left\{ \m{x}_i, u_i, \m{x}_{i+1} \right\}$, $i = k, \dots, k+N-1$. Let the system dynamics $\m{f}(\m{x}_k, u_k)$ be unknown. Find the control law $\m{\pi}^*(\m{x}_k, \m{\parameter}_k)$ such that $\forall {\m{x}_k,\m{\parameter}_k}$ the control $u^*_k=\m{\pi}^*(\m{x}_k, \m{\parameter}_k)$ minimizes the objective function
    \begin{equation}\label{eq:guetefkt}
    \begin{aligned}
     J_k &= \sum_{i=0}^{\infty} \gamma^{i}  \left(\! \matc{ x_{1,k+i} - r(\m{\parameter}_k, i) \\ x_{2,k+i} \\ x_{3,k+i} \\ x_{4,k+i}  }\transpose \m{Q} \matc{ x_{1,k+i} - r(\m{\parameter}_k, i) \\ x_{2,k+i} \\ x_{3,k+i} \\ x_{4,k+i} }\right. \\ &\phantom{=\sum_{i=0}^{\infty} \gamma^{i}}\left. \vphantom{\matc{ x_{1,k+i} - r(\m{\parameter}_k, i) \\ x_{2,k+i} \\ x_{3,k+i} \\ x_{4,k+i} }\transpose} + u_{k+i}\transpose R u_{k+i} \!\right)=: \sum_{i=0}^{\infty} \gamma^{i}c(\m{x}_{k+i}, u_{k+i}, r(\m{\parameter}_k, i)),
    \end{aligned}
    \end{equation}
    where ${\gamma\in(0,1)}$ denotes a discount factor, $\m{Q}$ is assumed to be positive semi-definite and $R$ positive definite.
    \todod{falls Schreibweise mit $c$ nirgends verwendet, weg damit.\textbf{Sean:}Kommt in den Ergebnisplots mit $i=0$ vor, deshalb drin lassen?. Perfekt, danke für den Hinweis!}
\end{problem}

\section{ADP Tracking Theory}\label{sec:theory}
\gliederungd{
Inhaltsübersicht dieses Abschnitts:
\begin{itemize}
    \item Ramsteiner-Tracking-Ansatz zitieren und notwendige Notation/Gleichungen einführen (so knapp, wie möglich)
    \item Hier vermutlich policy iteration kurz erklären bzw. auf LSPI verweisen
\end{itemize}
Konkrete Story:
\begin{enumerate}
    \item Bellman-Gleichung als Q-Funktion
    \item Manipulation von p, sodass Markov-Eigenschaft erfüllt ist, also shiften (compatible with ADP, \cite[Note~1]{Koe+20})
    \item The control $u_k$ minimizing $xxx$ is a solution to Problem~\ref{prob:tracking} \cite[Lemma~1]{Koe+20} (nur, falls optimale Q-Funktion gefunden/definiert wurde)
    \item LSPI mit Fixpunktgleichung hier?
    \item Approximation der Q-Funktion
\end{enumerate}
}
In this section, we briefly summarize the theoretical background on our ADP tracking formalism related to Problem~\ref{prob:tracking}.
\begin{lemma}\label{lem:optimal_controller}
    Define 
    \begin{equation}\label{eq:shifted_p}
    	{\m{\parameter}^{(i)}_k}^\intercal = \m{\parameter}_k^\intercal \m{T}(i),
    \end{equation} 
    where $\m{T}(i)$ is chosen such that
    \begin{align}\label{eq:definitionT}
	r \left( \m{\parameter}^{(i)}_k, j \right) &= r ( \m{\parameter}_k, i+j), \quad \forall i,j \in \mathbb{N}_0
	\end{align}
    holds and 
    \begin{equation}\label{eq:Q_optimal}
        \begin{aligned}
            Q^*(\m{x}_k,u_k,\m{\parameter}_k) &= c(\m{x}_k,u_k,r(\m{p}_k,0))\\ &\phantom{=}+\sum_{i=1}^{\infty} \gamma^i c\!\left(\m{x}_{k+i},\pi^*\!\left(\m{x}_{k+i},\m{p}_k^{(i)}\right)\!,r(\m{p}_k,i)\right)\\
            &= c(\m{x}_k,u_k,r(\m{p}_k,0))\\ &\phantom{=}+\gamma Q^*\!\left(\m{x}_{k+1},\pi^*\!\left(\m{x}_{k+1},\m{p}_k^{(1)}\right)\!,\m{\parameter}_k^{(1)}\right)\!.
        \end{aligned}
    \end{equation}
    Then,
    \begin{align}\label{eq:Q-function_min}
        u^*_k = \argmin_{u_k} Q^*(\m{x}_k,u_k,\m{\parameter}_k)
    \end{align}
    is a solution to Problem~\ref{prob:tracking}.
\end{lemma}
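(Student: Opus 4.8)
The plan is to identify $Q^*$ with the optimal state-action value (cost-to-go) function of the tracking problem and then invoke Bellman's principle of optimality. First I would introduce the optimal value function
\begin{equation*}
 V^*(\m{x}_k,\m{\parameter}_k) = \min_{u_k,u_{k+1},\dots} \sum_{i=0}^{\infty}\gamma^i c\!\left(\m{x}_{k+i},u_{k+i},r(\m{\parameter}_k,i)\right),
\end{equation*}
so that solving Problem~\ref{prob:tracking} is equivalent to computing $V^*$ together with a minimizing control sequence. The goal is then to show that $\argmin_{u_k} Q^*(\m{x}_k,u_k,\m{\parameter}_k)$ returns exactly the first element $u_k^*$ of such an optimal sequence, i.e. a valid optimal control law.

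The central step---and the one where the construction of $\m{T}(i)$ does the real work---is to re-index the tail of $J_k$. Splitting off the $i=0$ term leaves $\gamma\sum_{j=0}^{\infty}\gamma^{j}c(\m{x}_{k+1+j},u_{k+1+j},r(\m{\parameter}_k,j+1))$. Using the defining shift property \eqref{eq:definitionT} with outer index $1$, namely $r(\m{\parameter}_k,j+1)=r(\m{\parameter}_k^{(1)},j)$, the tail becomes $\gamma\sum_{j=0}^{\infty}\gamma^{j}c(\m{x}_{k+1+j},u_{k+1+j},r(\m{\parameter}_k^{(1)},j))$, which is precisely the objective of a fresh instance of Problem~\ref{prob:tracking} started at $k+1$ with parameter $\m{\parameter}_k^{(1)}$. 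This is the key observation: although the reference is time-varying from the fixed local perspective at $k$, the shift \eqref{eq:shifted_p} absorbs the elapsed time into the parameter, so that $(\m{x},\m{\parameter})$ together form a \emph{stationary} (Markov) state. Minimizing the tail over $u_{k+1},u_{k+2},\dots$ therefore yields $\gamma V^*(\m{x}_{k+1},\m{\parameter}_k^{(1)})$, and hence
\begin{equation*}
 V^*(\m{x}_k,\m{\parameter}_k)=\min_{u_k}\left[c(\m{x}_k,u_k,r(\m{\parameter}_k,0))+\gamma V^*(\m{x}_{k+1},\m{\parameter}_k^{(1)})\right].
\end{equation*}

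It then remains to match this Bellman equation with the definition \eqref{eq:Q_optimal}. I would argue by a fixed-point/consistency induction that the bracketed expression equals $Q^*(\m{x}_k,u_k,\m{\parameter}_k)$ precisely when the policy $\pi^*$ appearing inside \eqref{eq:Q_optimal} is itself the greedy policy $\pi^*(\cdot,\cdot)=\argmin_{u}Q^*(\cdot,u,\cdot)$; substituting the optimal tail achieves $V^*(\m{x}_{k+1},\m{\parameter}_k^{(1)})$, so that $Q^*(\m{x}_k,u_k,\m{\parameter}_k)=c(\m{x}_k,u_k,r(\m{\parameter}_k,0))+\gamma V^*(\m{x}_{k+1},\m{\parameter}_k^{(1)})$. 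Taking $\argmin_{u_k}$ then reproduces the optimal control, proving the claim.

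The main obstacle I anticipate is the self-consistency of the recursive definition \eqref{eq:Q_optimal}: one must ensure that the $\pi^*$ used to evaluate the tail coincides with the greedy policy the lemma ultimately extracts, and that the shift operator composes correctly, i.e. $(\m{\parameter}_k^{(1)})^{(i)}=\m{\parameter}_k^{(i+1)}$ (equivalently $\m{T}(1)\m{T}(i)=\m{T}(i+1)$), which is needed so that repeated one-step shifts reproduce the global re-indexing of the reference used in the tail. Both follow from \eqref{eq:definitionT} read at the level of the reference functions, but they are the points that require care; convergence of the discounted series (guaranteed by $\gamma\in(0,1)$ together with the quadratic stage cost) should also be noted to justify interchanging the stepwise minimization with the infinite summation.
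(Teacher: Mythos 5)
Your proposal is correct. Note that the paper itself gives no argument here: its ``proof'' is a single citation to Lemma~1 of the authors' earlier work \cite{Koe+20}, so there is no in-paper derivation to compare against. Your reconstruction follows the route that lemma is built on — split off the stage-$0$ cost, use the shift property \eqref{eq:definitionT} to re-index the tail as a fresh instance of the problem started at $k+1$ with parameter $\m{p}_k^{(1)}$, observe that the pair $(\m{x}_k,\m{p}_k)$ thereby becomes a stationary Markov state, and invoke Bellman optimality so that the greedy minimizer of $Q^*$ is the optimal first control. This is exactly the mechanism the paper alludes to in its Note (``compatible with ADP''). You also correctly identify the two points that need explicit verification: the semigroup property $\m{T}(1)\m{T}(i)=\m{T}(i+1)$ (which follows from \eqref{eq:definitionT} once the basis functions are linearly independent, and can be checked directly for the matrix in \eqref{eq:T}) and the self-consistency of $\pi^*$ as the greedy policy of the $Q^*$ it defines. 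The only caveat worth adding is that convergence of the discounted series on an unbounded state space needs a boundedness or closed-loop growth assumption beyond $\gamma\in(0,1)$ alone; the paper inherits whatever such assumption is made in the cited reference.
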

\begin{proof}
    See~\cite[Lemma~1]{Koe+20}.
\end{proof}
\begin{note}
    $Q^*(\m{x}_k,u_k,\m{\parameter}_k)$ is the accumulated discounted cost if the system is in state $\m{x}_k$, the control $u_k$ is applied at time step $k$ and the optimal control $\pi^*(\cdot)$ thereafter. Using the shifted reference trajectory approximation $\m{p}_k^{(i)}$ (cf.~\eqref{eq:shifted_p}) ensures that the Q-function $Q^*(\cdot)$ is compatible with ADP (cf.~\cite[Note~1]{Koe+20}).
\end{note}
As the optimal Q-function $Q^*(\m{x}_k,u_k,\m{\parameter}_k)$ is unknown, linear function approximation (FA) (cf. \cite{Bus10, Lewis.2009, Wang.2018, Jiang.2014c, Bhasin.2013, Lag+03, Koe+20, KWF20, Puc20, Mod14, Luo+16, Kiu14}) is commonly used\footnote{Compared to nonlinear FA, linear FA is easier to handle, usually requires less training data and allows an analytical relation between the Q-function and the optimal controller \cite{Bus10}.}. Thus, suppose
$\hat{Q}(\m{x}_k, u_k, \m{\parameter}_k) = \m{\hat{w}}\transpose \m{\phi}(\m{x}_k, u_k, \m{\parameter}_k)$, where $\m{\hat{w}}\in\mathbb{R}^{\numberweights}$ is a weight vector to be adapted and $\m{\phi}(\cdot)\in\mathbb{R}^{\numberweights}$ a vector of activation functions. A common approach in order to tune $\m{\hat{w}}$ is given by a PI (see e.g. \cite{Lewis.2009, Lag+03, Bus10}). In this iterative procedure, each iteration $l$ consists of two steps. The \textit{policy evaluation} step estimates the Q-function 
\begin{equation}\label{eq:Q_hat}
    \hat{Q}^{\hat{\pi}_{l}}(\m{x}_k, u_k, \m{\parameter}_k) = \m{\hat{w}}_{l}\transpose \m{\phi}(\m{x}_k, u_k, \m{\parameter}_k)
\end{equation}
of the current policy $\hat{\pi}_{l}$, i.e. adapts $\m{\hat{w}}_{l}$ in order to solve 
\begin{equation}\label{eq:Q_policy_evaluation}
    \begin{aligned}
        \hat{Q}^{\hat{\pi}_{l}}(\m{x}_k, u_k, \m{\parameter}_k) &= c(\m{x}_k,u_k,r(\m{p}_k,0))\\ &\phantom{=}+\gamma \hat{Q}^{\hat{\pi}_{l}}\!\left(\m{x}_{k+1},\hat{\pi}_{l}\!\left(\m{x}_{k+1},\m{p}_k^{(1)}\right)\!,\m{\parameter}_k^{(1)}\right)\!.
    \end{aligned}
\end{equation}
The \textit{policy improvement} step then greedily updates the policy $\hat{\pi}_{l+1}$ based on $\hat{Q}^{\hat{\pi}_{l}}$:
\begin{equation}\label{eq:policy_improvement}
    \hat{\pi}_{l+1}(\m{x}_k, \m{\parameter}_k) = \argmin_{u_k} \hat{Q}^{\hat{\pi}_{l}}(\m{x}_k,u_k,\m{\parameter}_k).
\end{equation}
Convergence results of a Q-function-based PI are given in e.g. \cite[Theorem~3.1]{Lag+03}, \cite[Theorem~1]{Luo+16}.





\section{ADP Tracking on the Ball-on-Plate System}\label{sec:learning}
\gliederungd{
Inhaltsübersicht dieses Abschnitts:
\begin{itemize}
    \item extended approximation --> resulting control law
    \item reference approximation
    \item training procedure (system data, reference data, norm)
\end{itemize}}
The ADP tracking formalism introduced in Section~\ref{sec:theory} is applied to the ball-on-plate system described in Section~\ref{sec:BaP}.

\commentd{alternativ hier schon das Ablaufdiagramm einbinden. Vorteil: man kann sich gleich daran entlanghangeln. Nachteil: die ganzen Größen und Schritte werden erst noch erklärt. Vielleicht ist es somit sinnvoller, es hinten zu lassen.}

\subsection{Quadratic Polynomial Reference Approximation}\label{sec:refApprox}\commentfd{ich finde, approximation passt gut in der Überschrift}
We choose the reference trajectory to be approximated by means of a quadratic polynomial 
\begin{equation}\label{eq:rpk}
r(\m{\parameter}_k, i) = \m{\parameter}_k\transpose \m{\rho}(i) = p_{k,2} (i\Delta t)^2 + p_{k,1} i\Delta t +  p_{k,0},
\end{equation}
with the basis functions $\m{\rho}(i) = \matc{(i\Delta t)^2 \ i\Delta t \ 1}\transpose$ and the parameter vector $\m{\parameter}_k = \mat{p_{k,2} \ p_{k,1} \ p_{k,0} }\transpose $, where $\Delta t$ denotes the sampling time.

The transformation needed to obtain the propagated version $\m{\parameter}_k^{(i)}$ of $\m{\parameter}_k$ according to \eqref{eq:rApprox} and \eqref{eq:definitionT} is given by 
\begin{align}\label{eq:T}
r \left( \m{\parameter}^{(i)}_k , j \right) &= \m{\parameter}_k\transpose \m{\rho}(i+j) = \m{\parameter}_k\transpose \matc{((i+j)\Delta t)^2 \\ (i+j)\Delta t \\ 1 } \nonumber \\ &= \m{\parameter}_k\transpose \underbrace{\matc{1 & 2i\Delta t & (i\Delta t)^2 \\ 0 & 1 & i\Delta t \\ 0 & 0 & 1 }}_{=: \m{T}(i)} \m{\rho}(j) = {\m{\parameter}^{(i)}_k}\transpose \m{\rho}(j),
\end{align}
$\forall i,j \in \mathbb{N}_0$.
For any desired reference trajectory $\trajref_k$, a parameter vector $\m{\parameter}_k$ is to be found at each time step $k$, such that $r(\m{\parameter}_k,i)$, ${i \in \mathbb{N}_0}$, is an approximation of $\trajref_{k+i}$. The desired reference trajectory is assumed to be known during runtime over a horizon of $\horizon \in \mathbb{N}_{>0}$ timesteps. In each time step, $\m{\parameter}_k$ is determined by a weighted least-squares (LS) regression. Therefore, we define 
\begin{align}
    \trajref_{k:k+\horizon-1} &= \matc{\trajref_{k} & \trajref_{k+1} & \dots & \trajref_{k+\horizon-1}}, \\
    \m{W}_\text{p} &= \diag(1, \beta, \dots, \beta ^{\horizon-1}), \\
    \m{\rho}_{0:\horizon-1} &= \matc{\m{\rho}(0) & \m{\rho}(1) & \dots & \m{\rho}(\horizon-1)}\transpose,
\end{align}
with $\m{W}_\text{p}$ being a weighting matrix with the discount factor $\beta\leq 1$, so that future time steps in the horizon are less important for the fitting process than early time steps. The parameter for the reference trajectory approximation is then calculated with the weighted LS regression according to \cite{Koe+20} and given by
\begin{equation}\label{eq:pApprox}
\m{\parameter}_k\transpose = \trajref_{k:k+\horizon-1} \m{W}_\text{p} \m{\rho}_{0,\horizon-1} \left( \m{\rho}\transpose_{0,\horizon-1} \m{W}_\text{p} \m{\rho}_{0,\horizon-1} \right)^{-1} .
\end{equation}
\commentsd{Ich habe die Möglichkeit der veränderten Abtastzeit (in der MA mit $g$ eingeführt) hier mal entfernt, da die verwendeten Validierungstrajektorienplots alle mit g=1 arbeiten. Stelle ich aber gerne zur Diskussion, ob man sie wieder mit reinnimmt.  }\fragesd{Hier geht es ja nur um die veränderten Referenzpunkte beim Fitten der Referenzparameter, also dass nur jeder $g$-te Wert genommen wird? So ganz sehe ich gerade nicht, in welchem Fall das einen Vorteil bringt. Das System muss ja am Ende trotzdem mit der gleichen Abtastzeit laufen, mit der auch trainiert wurde, oder? Bleibt, wenn mit $g>1$ die Polynomparameter gefittet wurden wirklich auch die Matrix $\m{T}$ identisch? Fazit: wenn die Verwendung von $g$ deiner Meinung nach Vorteile bringen kann, irgendwie unterschiedliche Abtastzeiten ermöglicht oder so, dann können wir nochmal darüber diskutieren, ansonsten finde ich die Idee, das rauszunehmen, gut.} \commentsd{genau, es geht nur um die abtastzeit der referenzpunkte beim fitten. in der MA hab ich in den ergebnissen keinen wirklichen vorteil gesehen, $g>1$ zu verwenden, deshalb können wir es gerne rauslassen}

\subsection{Q-Function Approximation}
The approximated Q-function \eqref{eq:Q_hat} is chosen as\todod{Index e weg, falls die nicht-extended Variante nicht genannt wird}\fragesd{ich tendiere stark dazu, die nicht-extended Variante (ohne Offset) nicht explizit zu zeigen. Klar wird diese im Ergebnisteil untersucht, aber da könnten wir vielleicht auch einfach ``trained without using the additional offset term" schreiben, sonst wird das alles arg lang. Ja gerne, da stehe ich auch dahinter!}
\begin{align}\label{eq:qApproxErw}
\hat{Q}^{\hat{\pi}_{l}}(\m{x}_k, u_k, \m{\parameter}_k) &= 
\matc{ u_k \\ \m{x}_k \\ \m{\parameter}_{k} \\ 1}\transpose \matc{ h^{(l)}_\text{uu} \ \m{h}^{(l)}_\text{ux} \ \m{h}^{(l)}_\text{u\parameter} \ h^{(l)}_\text{u1} \\ \m{h}^{(l)}_\text{xu} \ \m{h}^{(l)}_\text{xx} \ \m{h}^{(l)}_\text{x\parameter} \ h^{(l)}_\text{x1} \\ \m{h}^{(l)}_\text{\parameter u} \ \m{h}^{(l)}_\text{\parameter x} \ \m{h}^{(l)}_\text{\parameter\parameter} \ h^{(l)}_\text{\parameter 1} \\ h^{(l)}_\text{1u} \ h^{(l)}_\text{1x} \ h^{(l)}_\text{1\parameter} \ h^{(l)}_\text{11} } \matc{ u_k \  \\ \m{x}_k \\ \m{\parameter}_{k} \\ 1 } \nonumber \\ &= \m{z}_{k}\transpose \m{H}_l \m{z}_{k}=\m{\hat{w}}_l\transpose \m{\phi}(\m{x}_k, u_k, \m{\parameter}_k),
\end{align}
with $\m{H}_l= \m{H}_l\transpose$, i.e. $\m{\phi}(\m{x}_k, u_k, \m{\parameter}_k)$ consists of the non-redundant elements of the Kronecker product $\m{z}_{k}\otimes \m{z}_{k}$ and $\m{\hat{w}}_l$ corresponds to the non-redundant elements of the unknown matrix $\m{H}_l$\footnote{Due to the symmetry of $\m{H}_l$, the weights corresponding to the off-diagonal elements of $\m{H}_l$ are multiplied by $2$.}.
This quadratic choice is motivated by the successful control of our system using a model-based linear quadratic (LQ) controller \cite{Kas19} and the fact that the Q-function of LQ optimal control problems is quadratic \cite{Koe+20}.

\todod{Hier policy evaluation mit LSTDQ beschreiben??}
For the policy evaluation step~\eqref{eq:Q_policy_evaluation} we utilize least-squares temporal-difference Q-learning (LSTDQ) \cite{Lag+03} using the fixed-point objective \cite[Section~5.2]{Lag+03}. Consequently, 
$N$ tuples $\left\{ \m{x}_k, u_k, \m{x}_{k+1}, \m{\parameter}_k, \m{\parameter}_k^{(1)} \right\}$ are used in order to obtain a least-squares solution of $\m{\hat{w}}_{l}$ from \eqref{eq:Q_policy_evaluation}. Due to its off-policy characteristic, the measured samples can be re-used in each iteration of the PI which renders the method data-efficient.
\todod{Irgendwo (wo passend) einbauen: 
Betonen, dass LSPI off-policy arbeitet und somit: the set of samples can be reused in every iteration, which makes the method data efficient.}
Furthermore, the minimization in \eqref{eq:policy_improvement}\todod{besser auf Policy-Improvement-Gleichung verweisen} requires
\begin{equation}
\dfrac{\partial \hat{Q}^{\hat{\pi}_{l}}
}{\partial u_k}  = 2 \left(\m{h}^{(l)}_\text{ux} \m{x}_k + \m{h}^{(l)}_\text{u\parameter} \m{\parameter}_{k} + h_\text{u1}^{(l)}  + h_\text{uu}^{(l)} u_k\right) \overset{!}{=} \m{0}.
\end{equation}
This leads to the explicit\footnote{This analytic relation is a result of the quadratic penalty for $u$ in \eqref{eq:guetefkt}.} policy improvement step \eqref{eq:policy_improvement} \commentd{aus H, das $\hat{Q}$ parametriert kommt nicht zwingend der optimale Regler $\pi^*$ raus, das sollte von der Notation noch angepasst werden}
\begin{align}\label{eq:optimalU}
\hat{\pi}_{l+1}(\m{x}_k, \m{\parameter}_k) &= - \underbrace{\left(h^{(l)}_\text{uu}\right)^{-1} \matc{\m{h}^{(l)}_\text{ux} \ \m{h}^{(l)}_\text{u\parameter} \ h^{(l)}_\text{u1}}}_{\m{L}_l} 
\matc{\m{x}_k \\ \m{\parameter}_{k} \\ 1},
\end{align}
which sets a motor current $I_k^{[d]}$ depending on $\m{x}_k, \m{\parameter}_k$ and a static offset.
\begin{note}
    The choice of $\hat{Q}(\cdot)$ in \eqref{eq:qApproxErw} extends the approximation used in \cite{Koe+20} by an offset term. This allows the controller to learn a static offset compensation, i.e. if the weight of the plate is slightly unbalanced. 
\end{note}


\subsection{Training Procedure}\label{sec:trainingProcedure}
\todod{An welche Stelle: While the model-based control solution is complemented with a\commentd{sehr selten auch ``an", aber laut dict.cc eher ``a"} heuristically determined static current which compensates the plate-imbalance, this offset current is to be determined in our work automatically within the learning process.}\todod{Wichtig! Überprüfen, ob alle Größen genannt werden, auch $\beta, \polorder, \horizon$ und Abbruchkriterium $\epsilon$ usw.!}
The offline least-squares policy iteration (LSPI) algorithm\commentd{entweder hier komplett auf den LSPI mit Quelle und Fixpunktvariante verweisen oder oben kurz einführen} \cite{Lag+03} utilized in this work iteratively improves a policy by using offline recorded data tuples. These consist to one part of system data extracted through interaction with the system, and to the other part of a generated training reference trajectory. 

\subsubsection{System Data}
System data is collected by human interaction with the system. Manual control elements allow to set target plate angles which are controlled with a suboptimal controller. The system states can then be excited by varying the plate angle and data tuples $\left\{ \m{x}_k, u_k, \m{x}_{k+1} \right\}$ are collected. 

\commentd{plot eines ausschnitts mit systemmessdaten? \textbf{Florian}: Wenn Platz ist, wieso nicht. Sehe das aber eher optional, außer, es ist sinnvoll, daran etwas zu erklären.}

\subsubsection{Training Reference}
The Q-function \eqref{eq:qApproxErw} represents the cost of a chosen control $u_k$ not only referring to the current state $\m{x}_k$, but also to a desired target trajectory $\trajref_{k:k+\horizon-1}$ which is approximated by $\m{\parameter}_k$. Therefore, a training reference trajectory is generated, which consists of a linear combination of multiple sine functions with varying frequencies. A weighted LS approximation \eqref{eq:pApprox} is used to approximate the training reference at each time step by means of a quadratic polynomial ($\polorder = 3$) with a discount factor of $\beta = 0.8$ and $\horizon = 10$, resulting in the parameter vector $\m{\parameter}_k$. This parameter vector is then propagated according to \eqref{eq:T} to find $\m{\parameter}_k^{(i)}$.

The collected system data is smoothed (moving average of length $5$) and aggregated, together with the training reference parameters, to the tuples $\left\{ \m{x}_k, u_k, \m{x}_{k+1}, \m{\parameter}_k, \m{\parameter}_k^{(1)} \right\}$. We use $N = 1200$ data tuples for learning, which result with a sampling time of ${\Delta t = \SI{40}{\milli\second}}$\fragesd{passt das zu den 25-50 ms oben?} in $\SI{48}{s}$ of excitation data.\fragesd{Ich hatte gerade noch die Idee, dass wir vielleicht noch als (aus Anwendungssicht) sehr positiv hervorheben sollten, dass nur nur ein paar Sekunden Realdaten aufzeichnen müssen, um den Regler zu lernen! Häufig sind RL-Methoden nämlich extrem datenineffizient! Bei uns ist das einerseits natürlich durch die Vorauswahl der Basisfunktionen, andererseits dadurch, dass der LSPI-Algorithmus off-policy arbeitet, die Samples also wiederverwertet können (und wir die Samples für die Referenzparameter künstlich ergänzen). Ich finde das sehr erwähnenswert! Sprich: hier auch die Sekundenzahl der Daten angeben und im Abstract/der Einleitung und/oder der Conclusion das nochmals hervorheben, dass wir nur ein paar Sekunden Realdaten brauchen! Was denkst du? } 
For numerical stability, we introduce a normalizing factor $\normfactor = 10$ which is applied to the state vector and parameter vector ($\normstatevec_k = \normfactor \m{x}_k$, $\normparametervec_k = \normfactor \m{\parameter}_k$) such that the values of the system state and control input are in a similar range. 

Our goal in this work is to track the position of the ball. Additionally, we want the plate to preferably stay in a horizontal position. Therefore, we set $\m{Q} = \diag(800, 0, 400, 0)$
to strongly penalize the deviation of the ball position (i.e. $\m{x}_1$) from the parametrized reference as well as a deviation of the plate angle (i.e. $\m{x}_3$) from its horizontal position $\alpha = 0$ (cf. \eqref{eq:guetefkt}). We set the discount factor to ${\gamma = 0.9}$. For the initial iteration, we set all weights $\hat{\m{w}}_0$ to $1$.  

Using the LSPI algorithm, where the policy evaluation is done using a least-squares fixed-point approximation \cite[Section~5.2]{Lag+03}, we obtain updated weights $\hat{\m{w}}_{l}$ in each iteration\footnote{The complexity of each iteration is dominated by the policy evaluation step with $\mathcal{O}(n_{\text{w}}^3+n_{\text{w}}^2 N)$.}~$l$.
\fragejd{eigentlich ist diese Fußnote eine Wiederholung, allerdings ist hier die Erwähnung auch nicht verkehrt. Was denkst du, beide Male drin lassen, oder einmal weg?} \commentjd{Ich habe die Info bei C2) leicht umformuliert. Ich denke, damit und mit der Grafik müsste es ausreichend gut erklärt sein. Ich würde die Fußnote also entfernen.}\commentfd{danke!}The algorithm converges towards a fixed-point and is stopped when the stopping criterion 
\begin{equation}\label{eq:stoppingcrit}
|| \hat{\m{w}}_{l} - \hat{\m{w}}_{l-1} ||_{2} \leq \epsilon = \SI{1e-6}{}
\end{equation}
is fulfilled.
The final policy improvement step yields the control matrix $\m{L}$ (cf. \eqref{eq:optimalU}), i.e. the final control policy
\begin{align}
    \hat{\m{\pi}} (\m{x}_k, \m{\parameter}_k) = - \mat{ \m{L}_\text{x} &  \m{L}_\text{ref} & L_\text{off} } \matc{\m{x}_k\transpose & \m{\parameter}_k\transpose & 1}\transpose
\end{align}
after being re-normalized. All steps are summarized in Fig.~\ref{fig:ablauf}.

\todod{initial weights nennen}

\fragesd{könntest du bitte noch das verwendete Abbruchkriterium definieren und zwar die Überprüfung, die bei Erfüllung zum Ende der PI führt mit dem label eq:stoppingcrit ausstatten? Erstens müssen wir die Angabe sowieso machen und zweitens wird der Decision-Block im Ablaufdiagramm sonst viel zu groß, wenn ich da nicht einfach nur auf eine Gleichung verweise :) danke!}

\fragefd{Hast du eine schönere Lösung für das Minus-Zeichen? das wirkt so lang und hat nen abstand zur folgenden Zahl. \textbf{Florian}: die Länge sollte gleich sein wie vom $+$ oder $=$, das ist so lang also korrekt, wenngleich ich dir zustimme, dass das komisch aussieht. $+-=$ Der Abstand wurde jetzt durch die geschweiften Klammern kleiner, auch wenn ich es komisch finde, dass Latex das sonst so weit auseinander setzt. \textbf{Sean:} Wunderbar, so siehts irgendwie runder aus, danke!}

\begin{figure}[tb!]
\centering
\vspace*{0.068in}
\definecolor{gray1}{RGB}{228,228,228}
\definecolor{gray2}{RGB}{180,180,180}
\definecolor{gray3}{RGB}{245,245,245}

\tikzstyle{decision} = [diamond, draw, fill=gray2, 
text width=1.5cm, text badly centered, node distance=3cm, inner sep=0pt, aspect=1.5]
\tikzstyle{block} = [rectangle, draw, fill=gray1, 
text width=9em, text centered, rounded corners, minimum height=4em]
\tikzstyle{blockVar} = [rectangle, draw, fill=gray1, 
text centered, rounded corners]
\tikzstyle{blockDotted} = [dashdotted, draw, fill=gray1, 
text width=9em, text centered, rounded corners, minimum height=4em]

\tikzstyle{line} = [draw, -latex']
\tikzstyle{cloud} = [draw, ellipse,fill=red!20, node distance=3cm,
minimum height=2em]
\tikzstyle{input} =  [trapezium, draw, trapezium left angle=70, trapezium right angle=110, minimum width = 2cm, text centered] 
\tikzstyle{inputVar} =  [trapezium, draw, trapezium left angle=70, trapezium right angle=110, text centered] 

\usetikzlibrary{arrows}
\usetikzlibrary{fit}
\usetikzlibrary{backgrounds}
\usetikzlibrary{calc}

\newcommand\scaling{0.85}
\begin{tikzpicture}[scale = \scaling, every node/.style={scale=\scaling}]

\pgfmathsetmacro{\widthBlocka}{1.5cm}
\pgfmathsetmacro{\widthBlockb}{2cm}
\pgfmathsetmacro{\widthBlockc}{0.73cm}
\pgfmathsetmacro{\widthBlockd}{0.3cm}
\pgfmathsetmacro{\widthBlocke}{5.3cm}
\pgfmathsetmacro{\widthBlockf}{0.5cm}

\node [inputVar, text width = \widthBlocka] (sysexcitation) {system\\excitation};
\node [inputVar, text width = \widthBlocka, right = 0.3cm of sysexcitation] (trainingref) {training\\reference};
\node [inputVar, text width = \widthBlockb, right = 0.3cm of trainingref] (rho) {$\m{\rho}(i), \beta, \polorder, \horizon$};
\node [inputVar, text width = \widthBlockc, left = 0.3cm of sysexcitation] (phi) {$\m{\phi}, \hat{\m{w}}_0$};

\node [blockVar, text width = \widthBlockb, below = 0.85cm of sysexcitation.center] (system) {ball-on-plate\\system};

\node [blockVar, text width = \widthBlockb, below = 0.85cm of rho.center] (calculatepk) {calculate $\m{p}_k$ \\ (see \eqref{eq:pApprox})};

\node [blockVar, scale = 1/\scaling, inner sep=0pt, yshift=-2.5cm, fit={(system) (calculatepk)}, label=center:{store $N$ tuples $\left\{\m{x}_k, u_k, \m{x}_{k+1}, \m{p}_k, \m{p}^{(1)}_k   \right\} $}] (ntuples) {};

\path [line] (calculatepk) -- (ntuples.north -| calculatepk) node[pos=0.35, right](pk){$\m{p}_k$};

\path [line, -latex'] (sysexcitation.south) -- (system.north) node[midway, right](uk){$u_k$};

\path [line, -latex'] (trainingref.south) |- (calculatepk.west) node[below left](rdes){$\trajref_{k:k+\horizon-1}$};

\path [line, -latex'] (rho.south) -- (calculatepk.north);

\coordinate[left = 0.3cm of ntuples] (leftoftuples);
\path [line, -latex'] (uk) -| (leftoftuples) -- (ntuples.west);
\node[circle,fill=black,inner sep=0pt,minimum size=3pt] (dotuk) at (uk -| system.south) {};

\node [blockVar, text width = \widthBlocka, left = 0.8cm of pk] (T) {$\m{T}(1)$ \\ (see \eqref{eq:T})};

\node[circle,fill=black,inner sep=0pt,minimum size=3pt] (dotpk) at (pk -| rho.south) {};
\path [line, -latex'] (pk) -- (T.east);
\path [line, -latex'] (T.south) -- (ntuples.north -| T.south) node[midway, right](pk1){$\m{p}_k^{(1)}$};

\path [line] (system) -- (ntuples.north -| system) node[pos=0.35, left](xk1){$\m{x}_{k+1}$};
\node[circle,fill=black,inner sep=0pt,minimum size=3pt] (dotxk1) at (xk1 -| system.south) {};

\node [blockVar, text width = \widthBlockf, right = 0.5 cm of xk1] (z) {$z^{-1}$};
\path [line] (dotxk1.center) -- (z.west);
\path [line] (z.south) -- (ntuples.north -| z) node[midway, right](xk){$\m{x}_{k}$};


\node [blockVar, scale = 1/\scaling, inner sep=0pt, yshift=-3.5cm, minimum height = 0.5cm, fit={(system.west) (calculatepk.east)}, label=center:{preprocessing and normalization, set ${l=0}$}] (preprocessing) {};

\path [line, -latex'] (ntuples.south) -- (preprocessing.north);


\node [blockVar, text width = \widthBlocke, below = 0.8 cm of preprocessing.south] (poleval) {\textbf{policy evaluation} \eqref{eq:Q_policy_evaluation}\\ $\hat{\m{w}}_{l+1}$ based on $N$ tuples};

\node [blockVar, text width = \widthBlocke, below = 0.4 cm of poleval.south] (polimprov) {\textbf{policy improvement} \eqref{eq:optimalU} \\ calculate $\hat{\pi}_{l+1}$, set ${l=l+1}$ };

\node [decision, left=0.3cm of polimprov.west, text width = 1cm] (stop) {\eqref{eq:stoppingcrit}};

\path [line, -latex'] (polimprov.west) -- (stop.east);

\path [line, -latex'] (stop.north) |- (poleval.west) node[near start, left](no){false} node[near end, above](pil){$\hat{\pi}_l$};

\coordinate[right = 0.3cm of poleval] (rightofpoleval);
\path [line, -latex'] (poleval.east) -| (rightofpoleval) |- (polimprov.east) node[pos=1, below right](wl1){$\hat{\m{w}}_{l+1}$};

\coordinate[] (boxtop) at ($(poleval.north) + (0,0.25cm)$) {};
\coordinate[] (boxbottom) at ($(polimprov.south) + (0,-0.25cm)$) {};

\begin{scope}[on background layer]
\draw[draw=black, rounded corners, fill=gray3] (phi.west |- boxtop) rectangle (rho.east |- boxbottom);
\end{scope}

\path [line, -latex'] (phi.south) -- (phi.south |- boxtop);

\path [line, -latex'] (preprocessing.south) -- (poleval.north) node[midway, right](trainingontuples){~~training on $N$ tuples};

\node [blockVar, below = 1.2 cm of polimprov.west, anchor=west] (renormalization) {re-normalization};

\path [line, -latex'] (stop.south) |- (renormalization.west) node[pos=0.35, left](yes){true};

\node [inputVar, text width = \widthBlocka, right = 0.8cm of renormalization.east] (pifinal) {$\hat{\pi}(\m{x}_k,\m{p}_k)$};

\path [line, -latex'] (renormalization.east) -- (pifinal.west);

\end{tikzpicture}
\caption{Training procedure to obtain the approximate optimal tracking controller $\hat{\pi}(\m{x}_k,\m{p}_k)$ for the ball-on-plate system.}
\label{fig:ablauf}
\end{figure}
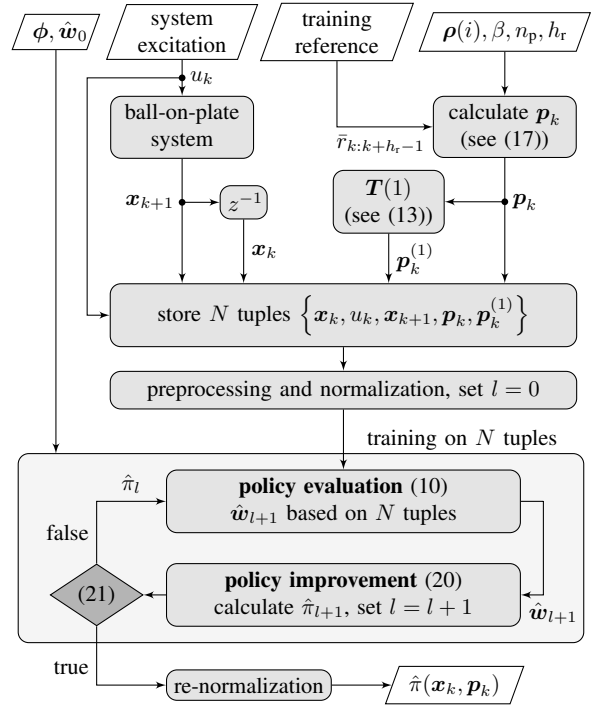

\section{Results}\label{sec:results}

\fragesd{Was hältst du davon, von einen Plot des Trainingsverlaufs zu spendieren? Klar, wenn man alle Gewichte geplottet über den Iterationen der PI zeigt, dann sind das turbo viele, aber die Konvergenz müsste man da ja trotzdem sehen? Ich finde den Plot zwar nicht beliebig aussagekräftig, aber den sieht man sehr oft. Zudem sieht man, wie viele Iterationen benötigt werden und dass die Gewichte konvergieren. \textbf{Sean:} Gute Idee, hab ich mal aufgenommen!}


To validate the learned ADP controller, we compare a learned controller $\m{L}_\text{ADP}$ with a model-based controller $\m{L}_\text{model}$. Since both dimensions are learned using the same approach, we firstly focus on a comparison in one dimension. 
The controllers are compared using a sine-like step function as well as a composite validation trajectory. In the second half of this section, we present the ability of two simultaneous controllers to follow a 2-dimensional trajectory.

We train a controller as described in Section~\ref{sec:trainingProcedure}. 
The convergence of $\hat{\m{w}}_l$ is depicted in Fig.~\ref{fig:trainingprocedure}. 
The resulting learned control matrix is
\begin{align}
    \m{L}_\text{ADP}^{[Y]} = [\underbrace{64.8 \ 32.3 \ 145.3 \ 16.2}_{\m{L}_\text{x}} \ \underbrace{{-27.9} \ {-36.9} \ {-60.7}}_{\m{L}_\text{ref}} \ \underbrace{-0.1}_{\m{L}_\text{Off}}].
\end{align}

\begin{figure}[b!]
    \centering
    \input{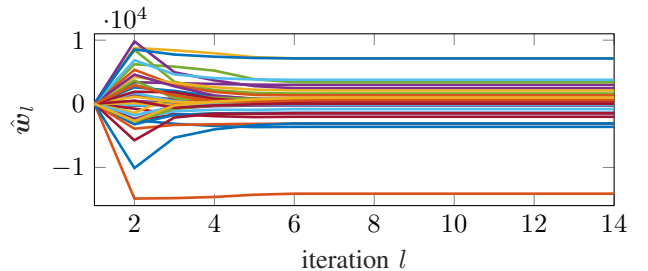}
    \caption{Estimated weights over all iterations of the LSPI algorithm.\commentd{oder Beginn bei Iteration $0$?}}
    \label{fig:trainingprocedure}
\end{figure}
\begin{figure*}[b!]
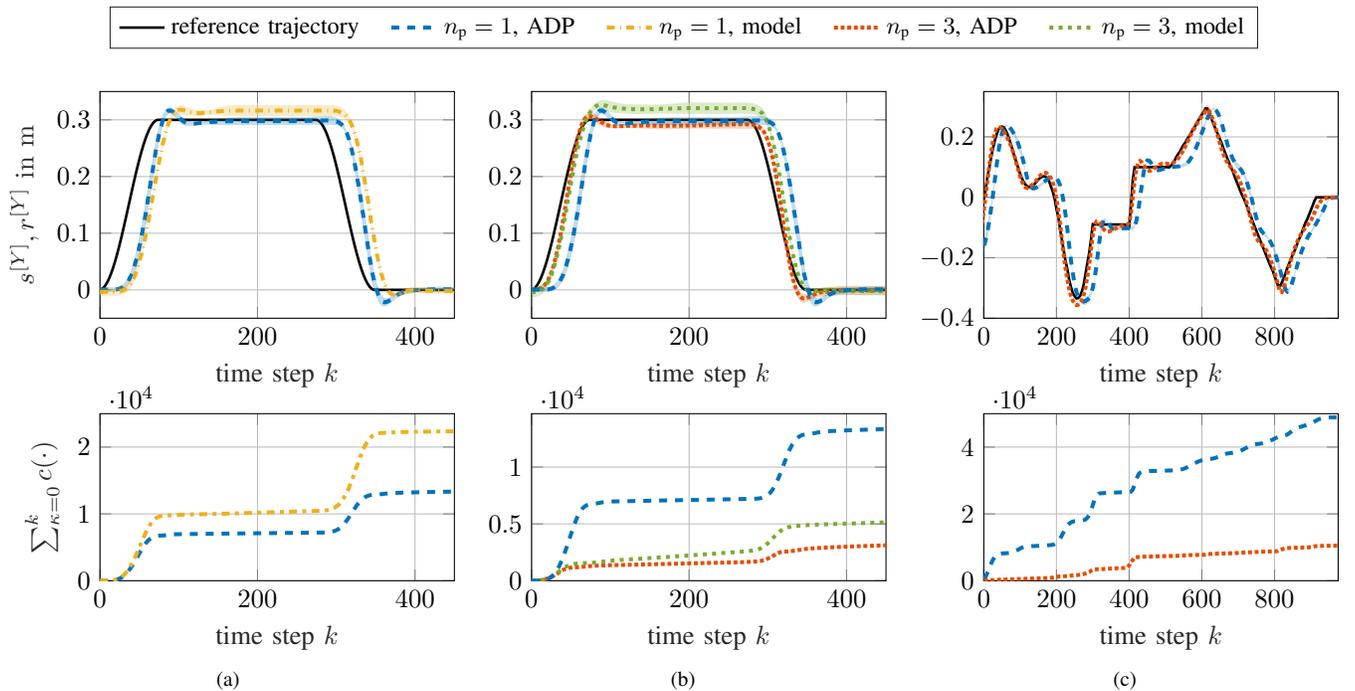

  \begin{center}
  \subfloat{
      {\definecolor{myblue}{rgb}{0.00000,0.44700,0.74100}
\definecolor{myyellow}{rgb}{0.92900,0.69400,0.12500}
\definecolor{myred}{rgb}{0.9,0.3,0}
\definecolor{mygreen}{rgb}{0.46600,0.67400,0.18800}

\begin{tikzpicture} 
    \begin{axis}[%
    hide axis,
    xmin=10,
    xmax=50,
    ymin=0,
    ymax=0.4,
    legend style={font=\small, draw=white!15!black,legend cell align=center, legend columns = 5, align = center}
    ]
    \addlegendimage{black, line width = 1pt}
    \addlegendentry{reference trajectory~~~};
    \addlegendimage{myblue, dashed, line width = 1.5pt}
    \addlegendentry{$\polorder = 1$, ADP~~~};
    \addlegendimage{myyellow, dashdotted, line width = 1.5pt}
    \addlegendentry{$\polorder = 1$, model~~~};
    \addlegendimage{myred, densely dotted, line width = 1.5pt}
    \addlegendentry{$\polorder = 3$, ADP~~~};
    \addlegendimage{mygreen, dotted, line width = 1.5pt}
    \addlegendentry{$\polorder = 3$, model};
    \end{axis}
\end{tikzpicture}}}\\
  \end{center}
      \setcounter{subfigure}{0}
  \subfloat[\label{fig:stepRLvsModelPos}]{
      {\input{stepRLvsModelPos_V3.tex}}} 
\hspace{\fill}
  \subfloat[\label{fig:stepRLvsModelTraj3} ]{
      {\input{stepRLvsModelTraj3plusPos_V3.tex}}}
\hspace{\fill}
  \subfloat[\label{fig:validationADPTrajVsPos}]{
     {\input{validationADPTrajVsPos_V3Crop.tex}}}\\
\caption{\label{subfigures}
\textbf{(a)} Setpoint controllers, learned in blue, model-based in yellow, compared on a sine-step function. \textit{Top:} Average ball position and standard deviation over 11 repetitions. \textit{Bottom:} Average accumulated one-step cost;
\textbf{(b)} Trajectory controllers, learned (red), model-based (green), compared to a setpoint controller (blue) on a sine-step function. \textit{Top:} Average ball position and standard deviation over 18, 13 and 11, repetitions. \textit{Bottom:} Average accumulated one-step cost;
\textbf{(c)} Learned trajectory controller ($\polorder = 3$, red) and learned setpoint controller ($\polorder = 1$, blue)\commentd{, compared to a setpoint controller (blue)} on a validation trajectory. \textit{Top:} Average ball position and standard deviation over 4 repetitions. \textit{Bottom:} Average accumulated one-step cost. 
\commentd{modelbasiert stehen auch plots für p=3 und p=1 zur verfügung, angelehnt an den step-plot könnte man einen modellbasierten für $\polorder = 3$ miteinfügen}
}
\end{figure*}
The model-based solution is calculated according to \cite[Theorem~2]{Koe+20} which solves the optimization problem described in Problem~\ref{prob:tracking} but uses a system model established specifically for our system (cf.~\cite{Kas19}).\commentd{\footnote{In our work, a modified, heavier plate as well as a different ball is used compared to \cite{Kas19}\comment{nicht doppelt nennen, oben oder hier weg?} The corresponding parameters are adapted accordingly}} The resulting model-based control matrix is given by
\begin{equation}
    \m{L}_\text{model}^{[Y]} = [\underbrace{53.4 \ 41.0 \ 167.8 \ 28.0}_{\m{L}_\text{x}} \ \underbrace{{-33.7} \ {-41.0} \ {-52.9}}_{L_\text{ref}}].
\end{equation}
\commentsd{kommt später nochmal: Note that no static offset term is present in the model-based controller.This is due to the fact, that the system model does not describe a plate imbalance, therefore a static offset current is determined heuristically.}

\subsection{Setpoint Control: Step}\label{sec:setpointcontrol}
In order to compare the model-free learned controller with the model-based calculated controller, both controllers are to follow a sine-like step function $\trajref$. Fig.~\ref{fig:stepRLvsModelPos} depicts the average ball position when using a learned (blue) and a model-based (yellow) setpoint controller with $\polorder = 1$, over 11 repetitions. The standard deviation is shown shaded.  Both controllers lag behind as they only have information about the current setpoint. The learned controller shows a slightly faster step response, which is reflected by lower accumulated one-step costs $ \sum_{\kappa=0}^{k} c\left( \m{x}_\kappa, u_\kappa, r(\m{p}_\kappa,0) \right)$ (see Fig.~\ref{fig:stepRLvsModelPos}).\commentsd{not sure wheter to include: This is a result of the higher value for $L_{\text{ADP},1}$ and smaller values for $L_{\text{ADP},3:4}$...}
\commentsd{Ist das schon zu viel reininterpretiert?: The learned controller also shows a smaller static error. This can be contributed to the fact that the learned controller learns from real data, while the model-based controller is based on a system model. As the large-scale plate is curved towards the edges due to the gravitational force, the ball tends to overshoot. This behaviour is stored within the training samples, which allows the learned controller to take it into account, within his range of ability with respect to the chosen base functions.}

\subsection{Trajectory Control: Step}

A comparison between a learned trajectory controller (red) and a model-based trajectory controller (green), both with $\polorder = 3$, is depicted in Fig.~\ref{fig:stepRLvsModelTraj3}. Both trajectory controllers allow a significantly better tracking of the reference trajectory compared to the learned setpoint controller (blue), as they receive information about the future course of the trajectory. This leads to significantly lower accumulated one-step costs, as seen in Fig.~\ref{fig:stepRLvsModelTraj3}. Similarly to the setpoint controllers, the learned trajectory controller shows lower accumulated costs compared to the model-based trajectory controller. 


\subsection{Trajectory Control: Validation Trajectory}

Fig.~\ref{fig:validationADPTrajVsPos} compares a learned trajectory controller ($\polorder = 3$) with  the learned setpoint controller ($\polorder = 1$) on a validation reference trajectory, which is composed of overlaid sines, step functions and ramps.
Again, an evidently better tracking of the trajectory is possible with the trajectory controller than with the setpoint controller, which leads to significantly lower accumulated one-step costs.
\todod{ggf. ``std" in Legenden komplett weg und nur in Bildunterschrift erklären? Wichtig auch: Mittelwert und Standardabweichung über wie viele runs? (wenn es nicht immer exakt gleich viele waren, halte ich das für unkritisch, aber es geht darum, waren es 2, 5, 10, 20, 100 runs?)}
\subsection{2D Trajectory Control}

In order to use the ball-on-plate system to its full extent, we apply two separately learned controllers, one for each plate dimension respectively. Learning with the same parameters as for the $Y$-dimension, but with system data tuples for the $X$-dimension, we receive the learned control law:
\begin{equation}\label{eq:LADPx}
  \m{L}_\text{ADP}^{[X]} = [\underbrace{65.3 \ 37.0 \ 135.1 \ 18.6 }_{\m{L}_\text{x}} \ \underbrace{{-28.8} \ {-38.2} \ {-61.2}}_{\m{L}_\text{ref}} \ \underbrace{2.2}_{\m{L}_\text{Off}}].
\end{equation}
\commentsd{Reihenfolge tbd. \textbf{Aktuell V1:} Zuerst den Offset plot mit Erklärung bringen wirkt wie ein störender Einschub. \textbf{V2:} Offset-plot am Schluss. Dagegen spricht, dass der 2D-Plot ein schöner runder Abschluss des Ergebnsikapitels ist. }
$L_\text{off}^{[X]} = 2.2$ leads to a static offset current of $\SI{-2.2}{A}$, since the plate exhibits a mass-imbalance which needs to be compensated. For a model-based solution, this current would have to be determined heuristically, as the mass-imbalance is not described by the system model. Not using a static offset current leads to an asymmetric behavior of the ball position, as depicted in Fig~\ref{fig:stepsXOffset}. In comparison, a learned controller that allows the learning of an offset current leads to a symmetric behavior of the ball position. Fig.~\ref{fig:rectange2D} displays the tracking of a 2-dimensional reference trajectory.
\begin{figure}[tb!]
    \centering
    \vspace*{0.065in}
    \input{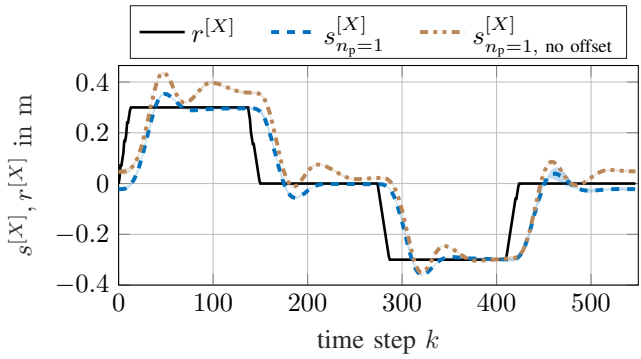}
    \caption{Comparison of a learned setpoint controller with base functions that allow the learning of a static offset current (blue) versus a controller with base functions that do not allow the learning of a static offset current (brown).}
    \label{fig:stepsXOffset}
\end{figure}
\begin{figure}[tb!]
    \centering
    \input{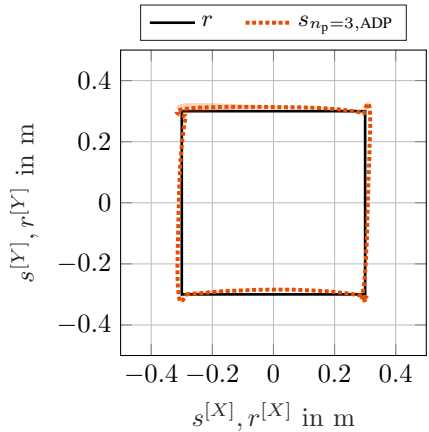}
    \caption{Trajectory control of a rectangle. Average ball position and standard deviation over 4 repetitions.}
    \label{fig:rectange2D}
\end{figure}

\todod{Offsetterm hervorheben - statt heuristische Ermittlung beim modellbasierten Ansatz, wird er hier selbstständig erlernt }

\todod{-in den plots die dimension deutlich machen
-plotbeschriftungen übersetzen / anpassen, auch $b$ durch $\polorder$ ersetzen!
-stepplots mit g=1 statt g=3?}

\gliederungd{Welche Ergebnisse haben wir vorzuweisen / sind am wichtigsten?
\begin{enumerate} 
    \item Unser Regler ist nur mit Offlinedaten erlernt und liefert ein Regelgesetz, welches dem modellbasierten Regelgesetz nicht nachsteht (er lernt sogar ein Regelgesetz, welches auf das echte System passt, und nicht auf ein evt ungenau modelliertes Systemmodell)
    \item Unser erlernte Regler ermöglich sogar das selbstständige Erlernen eines Offsets, welcher im modellbasierten Ansatz heuristisch ermittelt werden muss
    \item Unser Trackingregler ermöglicht eine Trajektorienfolge, die der Positionsfolge deutlich "überlegen" ist
\end{enumerate}}

\gliederungd{Welche Plots zeigen das am besten? Kann man manche Ergebnisse in einem Plot gemeinsam zeigen?
\begin{enumerate}
    \item sinesprung Positionsregler selbstlernend vs modellbasiert (oder vlt 1D-Anteil des 2D-Rechtecks oder der Acht?)
    \item tbd ( getrennter plot wie in der MA? ODER innerhalb eines 2D rechteck plots mit tracking, model vs. RL? [nicht wirklich fair]) 
    \item Validierungstrajektorie tracking selbstlernend vs modellbasiert
    \item 2D plot Rechteck und/oder liegende Acht
\end{enumerate}}

\gliederungd{
Plots MA:
\begin{itemize}
    \item b=1 Modell vs. RL - SprungY
    \item b=3 Modell vs. RL - SprungY
    \item b=3 Modell vs. RL - ValidierungstrajektorieY
    \item b=1 mit vs. ohne Offsetstrom RL - SprungX
    \item b=3 Variation Referenzapproximation RL - ValidierungstrajektorieY
    \item b=3 2D RL - liegende 8, Rechteck X+Y
\end{itemize}
---> Ergebnisse Paper:
\begin{itemize}
    \item b=1 Modell vs. RL - Validierungstrajektorie? / 1D-Anteil einer 2D-Trajektorie?
    \item b=3 Modell vs. RL - Validierungstrajektorie?
    \item 
\end{itemize}
zu klären:
\begin{itemize}
    \item Wie geht man mit dem benötigen Offset im modellbasierten Vergleichsregler um? Bei Nichtbetrachten ist der gelernte Regler deutlich überlegener, andererseits verwendet Adam auch einen Offset \commentfd{vielleicht könnten wir sogar beide modellbasierten Varianten zeigen? Um einerseits zu zeigen, wie schlecht ein solcher modellbasierter Ansatz sein kann, wenn das reale System einen nicht modellierten Offset hat, andererseits um natürlich auch fair zu sein, denn nur der offsetkorrigierte Regler hat dieselben "Chancen" (im Sinne der Reglerstruktur) wie der gelernte Regler.  Wie wurde denn der Offset beim modellbasierten Regler bestimmt? \textbf{Sean:} Offset im modellbasierten Regler wurde heuristisch bestimmt. Ich bin auf jeden Fall dafür, einen fairen Vergleich zwischen RL und model-based zu zeigen. Den erlernten Offset würde ich auch gerne reinbringen, ich weiß nur noch nicht an welcher Stelle das sinnvoll ist. Aber ich bin bei dir, dass der Vergleich des RL-Reglers mit gelernten Offset VS. modellbasiert ohne Offset nicht fair ist. }
\end{itemize}
}

\commentd{wenn Platz ggf. noch ein bis zwei Quellen unterbringen (z. B. zu ADP und Tracking?), damit wir 20 Quellen erreichen?}




\section{Conclusion}\label{sec:conclusion}
\commentd{ Florians erste Version:
With less than one minute of measured real data, our ADP-based tracking method that does not require a system model successfully learned an optimal tracking controller that outperformed its model-based counterpart. 
Furthermore, the implemented reference trajectory approximation yields predictive behavior which reduces the cost significantly compared to setpoint controllers.
Our ADP method including the automatic offset correction avoids tedious modeling as well as manual tuning which is required even when model-based controllers are used due to the required offset term.} \todod{schöner machen}

\commentd{ V2: mit Seans ergänzungen:
With less than one minute of measured real data, our model-free ADP-based tracking method successfully learned an optimal tracking controller that outperformed its model-based counterpart. This results in a slightly faster accelerated ball and a smaller static error, which leads to overall reduced accumulated costs. 
Furthermore, the implemented reference trajectory approximation yields predictive behavior which reduces the cost significantly compared to setpoint controllers. The predictive behavior leads to the ball position lagging much less behind and reduces the overshooting, both when applied to a step function as well as a composite validation trajectory. 
The learned controllers for each dimension can be applied simultaneously to the ball-on-plate system, thus allowing the tracking of 2-dimensional reference trajectories.
Our ADP method including the automatic offset correction avoids tedious modeling as well as manual tuning which is required even when model-based controllers are used due to the required offset term. Overall, the learned controller reveals a benefit of being learned on real measured data instead of being based on an extensive system model. }\todod{schöner machen \textbf{Florian}: so schlecht finde ich das gar nicht!}

\commentd{V3: Seans Versuch Jairos Input einzubauen}

In this paper, we presented the application of an ADP-based learned trajectory tracking controller on a large-scale ball-on-plate system. 
With less than one minute of measured real data, our model-free ADP-based method successfully learned an optimal tracking controller which allows the tracking of 2-dimensional reference trajectories and outperforms its model-based counterpart. In addition, the implemented reference trajectory approximation led to a faster accelerated ball, a smaller static error and therefore to overall reduced accumulated costs compared to setpoint controllers.
In summary, the experimental results show that our ADP method is suitable for real systems. It includes the autonomous learning of an offset correction and avoids tedious modeling and manual tuning. The resulting control law was proved to be more cost-effective in a real scenario, benefiting from being trained with real measured data.
Finally, due to the flexibility of function approximation, other basis functions can be studied in the future in order to allow for even more complex control tasks.

\commentfd{Die Conclusion ist sehr lang für ein 6-seitiges Paper, aber das ist wohl dem geschuldet, dass die ``discussion" da mit reingemogelt ist (sonst wiederholen wir uns aber auch nur). Ich finde dennoch, dass es zur vorigen Version besser ist und Jairos nachfoglenden Kommentar weitesstgehend löst.} \commentjd{ ``mehr verkaufen :)"
Gerade das mit dem automatischen Offset-Lernen finde ich sollte stärker in der Conclusion betont werden. Außerdem könnte die Conclusion eher die Einleitung aufgreifen und in die Richtung gehen: "Hier zeigen wir eine Anwendung von ADP mit einem REALEN Experiment. Unser ADP-Verfahren ist praxistauglich, was sich anhand der Ergebnisse am BaP gezeigt hat. Vorteile sind xyz... (hier u.A. auch das mit dem Offset erwähnen).
Also das Verfahren ist ja generell gut und nicht nur um BaP-Systeme zu regeln ;)}

\todod{sicherstellen, dass Abbildungen NACH erster Nennung auftauchen (bei BaP-Foto Verletzung okay, da Bild oben rechts sinnvoll)}
\gliederungd{\vspace{0.5cm}Brainstorming 19.06. (unsortiert):
\begin{itemize}
     \item Keine Reibungskompensation im Paper \check
    \item Anregung über Motorstrom? (statt Winkelregler zu benötigen)
    \item Evtl. aggressiver gestalten? $\m{Q}$ hochdrehen; wenn keine Sollwertsprünge, dann keine hohe Belastung auf Motoren. Kann sein, dass dann Reibungskompensation nicht mehr so wichtig (außer bei winzigen Sprüngen).
    \item Lernen direkt in zwei Dimensionen testen? 170 Gewichte (manche fallen evtl. raus $\rightarrow$ aufgrund von Nullen in $\m{Q}$?)
    \item erst schauen, was am System noch geht (Anregung über Strom bzw. Lernen in zwei Dimensionen), dann vorderen Teil schreiben
\end{itemize}}

\gliederungd{Story (grob)
\begin{itemize}
    \item selbstlernende Optimalregler
    \item viel ADP-Theorie, bisher wenig Anwendung an realen Systemen (Unterwasserroboter nur Simulation, auch mit Exodynamik (hat Vorausschau) keine/kaum Anwendungen)
    \item Tracking bei ADP meist wenig flexibel (entweder ohne Vorausschau oder wenig flexibel)
    \item erste Arbeiten zu ADP mit coolem Tracking existieren in der Theorie -> hier erstmalig Anwendung
\end{itemize}}

\gliederungd{\vspace{0.5cm}Entwurf gesamte Grobgliederung:
\begin{itemize}
    \item Einleitung (contribution klar machen); hier auch Abgrenzung zum SdT? $\rightarrow$ erstmal so probieren
    \item Kurze ADP-/Trackingtheorie
    \item Problemformulierung (BaP einführen, aber auch ADP-Trackingproblem)
    \item Eigener Ansatz 
        \begin{itemize}
            \item Modifikationen
            \item Trainingsprozedur
        \end{itemize}
    \item Results (auch Vergleich mit modellbasiertem Regler)
    \item (brief) Conclusion
    
\end{itemize}}

\gliederungd{
\section{wo veröffentlichen?}
\begin{itemize}
    \item falls wir SEHR zufrieden sind: IEEE Control Systems Letters (max. 6 Seiten, keine Deadline)
    \item IEEE Transactions on Control Systems Technology (maximal als brief paper oder letter)
    \item IET Control Theory and Applications (ohne Deadline, ``Applications papers need not necessarily involve new theory")
    \item IEEE ACC (max. 8 Seiten, Deadline 14. September bzw. 1. September mit zusätzlicher Control System Letters Option)
    \item IEEE ECC (max. 8 Seiten, Deadline 6. November)
    \item IEEE CCTA (Deadline vermutlich im Januar)
\end{itemize}}

\gliederungd{\input{sandbox}}

\bibliographystyle{IEEEtran}
\bibliography{literature} 

\end{document}